\renewcommand\theequation{{\thesection}.{\arabic{equation}}}
\newtheorem{thm}{Theorem}[section]
\newtheorem{lem}[thm]{Lemma}
\newdefinition{rmk}{Remark}
\newproof{pf}{Proof}
\begin{document}
\begin{frontmatter}

\title{A relative error estimation approach for single index model}

\author[a]{Zhanfeng Wang\corref{cor1}}
\author[a]{Zimu Chen}
\author[a]{Yaohua Wu}
\address[a]{Department of Statistics and
Finance, University of Science and Technology of China, Hefei, China.}
\cortext[cor1]{Department of Statistics and
Finance, University of Science and Technology of China, Hefei, China. (Email: zfw@ustc.edu.cn).}

\begin{abstract}
A product relative error estimation method for single index regression model is proposed
as an alternative to absolute error methods, such as the least square estimation and the least absolute deviation estimation.
It is scale invariant for outcome and covariates in the model. Regression coefficients are estimated via a two-stage procedure and their statistical properties such as consistency and normality are studied. Numerical studies including simulation and a body fat example show that the proposed method performs well.
\end{abstract}
\begin{keyword}


Single index model\sep Relative errors \sep  Least product relative error\sep Asymptotic properties

\MSC[2010] 62G08 \sep 62G20
\end{keyword}

\end{frontmatter}

\section{Introduction}

Relative errors are invariant under scale transformation of outcome and covariates in regression models, and are robust against outliers\textsuperscript{\cite{Chen:2010ic,Chen2016LPRA}}. Hence, relative error-based methods are more of concern in many applications as an alternative to absolute error methods, such as the least square estimation and the least absolute deviation estimation. About studies of the relative errors on both of theoretical and application viewpoints, please refer to [\citenum{Chen:2010ic,Chen2016LPRA,Park1998227,Ye:2007fz,zhang2012local}] and therein references.
For example, [\citenum{Chen:2010ic}] and [\citenum{Chen2016LPRA}] proposed the least absolute relative error  estimate (LARE) and the least product relative error  estimate, respectively, for a multiplicative linear regression model.  The LARE criterion was used to build a local least absolute relative error estimation method for a partially linear multiplicative model ([\citenum{zhang2012local}]). A relative error-based change point
estimation approach was proposed in [\citenum{wang2015change}].

Most of the previous studies on the relative errors, however, are built for multiplicative linear regression models and partially linear multiplicative models.
Single index models are more flexible compared to linear models and are widely used in many fields such as medicine and econometrics.
To the best of our knowledge, estimation approaches for single index models are mostly built on the absolute error in the literature, details
please referring to [\citenum{hristache2001direct,ruppert2003semiparametric,stute2005nonparametric,zhu2006empirical,Wang:2010dj,Chang:2010kk,kong2012single,hardle2012nonparametric}]
and therein references. Hence, a relative-error based approach for single index models is much desired, especially when the relative error is more of concern
in the analysis of financial data and survival data.

Since output (response) variable is generally positive when relative error is of concern, following {\cite{Chen:2010ic,Chen2016LPRA}} for the linear regression model, it generally
considers the following multiplicative single index model,
\begin{equation}
Y_i=\exp\left\{g(X_i^\tau \beta)\right\}\varepsilon_i,\qquad i=1,\cdots,n,
\label{eq:org}
\end{equation}
where $Y_i$ is response variable with positive value, $X_i$ is a $p$-vector of explanatory variables
, $\beta$ is unknown $p\times 1$ vector of regression parameters, $g$ is a unknown link function and $\varepsilon_i$ is error term with positive value. For identifiability of parameter estimation,
it is often assumed that
$\|\beta \|=1$ and the $r$th component of $\beta$ is positive,
where $\|\cdot\|$ stands for the Euclidean metric.

For model (\ref{eq:org}),
two types of relative errors: one relative to the response and the other relative to
 the predictor of the response, can be constructed as
 \begin{eqnarray*}
 \frac{Y_i-\exp\left\{g(X_i^\tau \beta{\color{black})}\right\}}{Y_i} ~~~{\rm and}~~~   \frac{Y_i-\exp\left\{g(X_i^\tau \beta)\right\}}{\exp\left\{g(X_i^\tau \beta)\right\}}.
 \end{eqnarray*}
  Only using one of the relative errors can lead to biased estimation [\citenum{Chen:2010ic}].
 Thence, in this paper  we propose the least product relative error (LPRE) criterion,
\begin{equation}
        LPRE(\beta)\equiv \sum^{n}_{i=1}\left\{
        \left| \frac{Y_i-\exp\{ g(X_i^\tau\beta)\}}{Y_i}  \right|
        \times
        \left| \frac{Y_i-\exp\{ g(X_i^\tau\beta)\}}{\exp\{ g(X_i^\tau\beta)\}}  \right|
\right\}. \label{lpre}
\end{equation}
A two-stage estimation procedure for $\beta$ and $g(\cdot)$ is developed based on the LPRE criterion,
which has the unit-free or scale-free property. As in [\citenum{Chen2016LPRA}], the LPRE method also possesses certain robustness.
We also study asymptotic properties such as consistence and normality of the proposed estimation.
Numerical studies including simulation and a body fat data analysis demonstrate that the proposed approach has better or comparable performance than
the least square method.

The rest of the paper is organized as follows. For model (\ref{eq:org}), Section 2 introduces LPRE estimation approach and studies asymptotic properties of the proposed estimators. Numerical studies including simulation and a real data application are conducted in Section 3. Technical proofs of main theorems are collected in Appendix.

\section{Methodologies}

Suppose $(X_i,Y_i), i=1,...,n$, are independent identically distributed (iid) from model \eqref{eq:org}.
By simple algebraic computation, the LPRE criterion \eqref{lpre} is rewritten as
\begin{equation}
        LPRE(\beta)= \frac 1 n \sum^{n}_{i=1}\left\{
        Y_i\exp\{-g({X}_i^\tau\beta)\}+Y_i^{-1}\exp\{g({X}_i^\tau\beta)-2\}
        \right\}.\label{eq:lpre}
\end{equation}
Since the link function $g$ is unknown, an estimator of $\beta_0$ can not be obtained directly via
minimizing (\ref{eq:lpre}) with respect to $\beta$. Instead,
we propose a two-stage estimation procedure to estimate $g(\cdot)$ and $\beta$.

\subsection{Two-stage estimation procedure}

Firstly, the unknown link function $g(\cdot)$ and its derivative $g'(\cdot)$ are estimated simultaneously by a local linear smoother method ([\citenum{fan1996local}]).
For a fixed $\beta$, let estimators of $g(z)$ and $g'(z)$ be $\hat{g}(z;\beta,h)=\hat a$ and $\hat{g}'(z;\beta,h)=\hat b$, respectively,  where $\hat a$ and $\hat b$ are minimizers of the weighted sum of squares
\begin{equation*}
  \sum^{n}_{i=1}\big\{\log Y_i-a-b(X_i^\tau\beta-z)\big\}^{2}K_h(X_i^\tau\beta-z)\; ,
\end{equation*}
where $K_h(\cdot)=K(\cdot/h)/h$ is a symmetric kernel function and $h=h_n$ is the bandwidth.
By Taylor expansion of \eqref{eq:lpre} with respect to $\log Y_i-g(X_i^\tau\beta)$,
the $LPRE(\beta)$ has the second order term $\sum_{i=1}^n(\log Y_i-g(X_i^\tau\beta))^2$.
From [\citenum{Chang:2010kk}], we have
\begin{align}
 &\hat{g}(z;\beta,h) = \sum^{n}_{i=1} W_{ni}(z,\beta,h)\log Y_i\;  \label{ghat}\\
 & \hat{g'}(z;\beta,h_1) = \sum^{n}_{i=1} \widetilde{W}_{ni}(z,\beta,h_1)\log Y_i\;  \label{dghat}
\end{align}
where
\[
W_{ni}(z,\beta,h) = \frac{K_h(X_i^\tau\beta-z)\big\{S_{n,2}(z;\beta,h)-(X^\tau_i\beta-z)S_{n,1}(z;\beta,h)\big\}}{S_{n,0}(z;\beta,h)S_{n,2}(z;\beta,h)-S_{n,1}^2(z;\beta,h)}\; ,\]
\[
\widetilde{W}_{ni}(z,\beta,h_1) = \frac{K_h(X_i^\tau\beta-z)\big\{S_{n,2}(z;\beta,h_1)-(X^\tau_i\beta-z)S_{n,1}(z;\beta,h_1)\big\}}{S_{n,0}(z;\beta,h_1)S_{n,2}(z;\beta,h_1)-S_{n,1}^2(z;\beta,h_1)}\; ,
\]
and
\[
 S_{n,r}(z;\beta,h) =\frac{1}{n} \sum^{n}_{i=1} (X^\tau_i\beta-z)^rK_h(X_i^\tau\beta-z),\quad r=0,1,2.
\]

{\color{black}\noindent
\textbf{Remark}
It is well known that convergence rate of the estimator of $\hat g (z)$ is slower than that of the estimator of $g(z)$ if the same bandwidth is used, which leads to a slower convergence rate of $\hat \beta$ than root-$n$. Following [\citenum{Wang:2010dj}] and  [\citenum{carroll1997generalized}], $h$ and $h_1$ take different values,
\begin{equation*}
h=\hat{h}_{\rm opt}\times n^{1/5}\times n^{-1/3}=\hat{h}_{\rm opt}\times n^{-2/15}\ \
{\rm and}\ \ {h}_1=\hat{h}_{\rm opt}\ ,
 \label{bandwith} 
\end{equation*}
where $\hat{h}_{\rm opt}$ is the optimal bandwidth of $\hat{g}$ from GCV.
}

Estimator of $\beta_0$ is built as follows.
For the identifiability of parameter estimation, we assume that $\|\beta\|=1$ and $\beta_{r}>0$ for $\beta=(\beta_{1},\ldots,\beta_{p})^\tau$. Let
$\beta^{(r)}=(\beta_1,\ldots,\beta_{r-1},\beta_{r+1},\ldots,\beta_p)^{\tau}$ after removing the $r$th component in $\beta$. The delete-one-component \textsuperscript{[\citenum{Yu:2002ka}]} used here is to transfer the restricted estimation function to the unrestricted estimation in the Euclidean space $R^{p-1}$. Then $\beta$ can be reparametrized as
\begin{equation}
 \beta=\beta(\beta^{(r)}) = (\beta_1,\ldots,\beta_{r-1}, (1-\|\beta^{(r)}\|^2)^{-1/2},\beta_{r+1},\ldots,\beta_{p})^\tau.
 \label{eq:BetaWithBetar}
\end{equation}
For true parameter $\beta_0=(\beta_{01},\ldots,\beta_{0p})^\tau$, $\beta^{(r)}_0$ satisfies $\| \beta_0^{(r)} \|<1$, and $\beta$ is infinitely differentiable in a neighborhood of $\beta^{(r)}_0$, where $\beta^{(r)}_0=(\beta_{01},\ldots,\beta_{0r-1}, (1-\|\beta_0^{(r)}\|^2)^{-1/2},\beta_{0r+1},\ldots,\beta_{0p})^\tau$. The Jacobian matrix of $\beta$ with respect to $\beta^{(r)}$ is defined as
\begin{equation}
        \label{eq:Jaco}
        J_{\beta^{(r)}}=\frac{\partial \beta}{\partial \beta^{(r)}}=(\gamma_1,\cdots,\gamma_p)^\tau\ ,
\end{equation}
where $\gamma_s~(1\leqslant s \leqslant p)$ satisfy that $\gamma_s=e_s$ for $1\leqslant s \leqslant r$, $\gamma_r=-(1-\| \beta^{(r)} \|^2)^{-1/2}\beta^{(r)}$, $\gamma_s = e_{s-1}$ for $r+1 \leqslant s \leqslant p$, and $e_s$ is a $(p-1)\times 1$ vector with $s$th component $1$ and the others $0$.

Let $\hat{g}(X_i^\tau\beta)=\hat{g}(X_i^\tau\beta;\beta,h)$ for simplicity of notations.
Plugging $\hat g$ in the $LPRE(\beta)$, it follows that
\begin{equation}
        \label{eq:EFM}
            D(\beta) := \sum^{n}_{i=1}\left\{
        Y_i\exp\{-\hat{g}(X_i^\tau\beta)\}+Y_i^{-1}\exp\{\hat{g}(X_i^\tau\beta)\}-2
        \right\}.
\end{equation}
It follows from \eqref{eq:BetaWithBetar} and  \eqref{eq:EFM} that $D(\beta) =D(\beta(\beta^{(r)}))$.
Thus, minimizing $D(\beta)$ with constraints $\|\beta\|=1$ and $\beta_{r}>0$ is equivalent to  minimizing $D(\beta(\beta^{(r)}))$
without constraints. Taking the first derivative of $D(\beta)$ with respect to $\beta(\beta^{(r)})$, we obtain an estimation equation for $\beta$,
\begin{equation}
        \label{eq:ESTD}
        Q_n(\hat g,\tilde \beta ^{(r)}) \equiv \frac 1 n\sum^{n}_{i=1} \left\{
        Y_i e^{-\hat{g}(X_i^\tau\beta)}
        -Y_i^{-1} e^{\hat{g}(X_i^\tau\beta)}
        \right\}
        \hat{g'}(X_i^\tau\beta)J_{\beta^{(r)}}^\tau X_i=0\; .
\end{equation}
Denote solution of (\ref{eq:ESTD}) by $\hat{\beta}^{(r)}$. From the transformation of (\ref{eq:BetaWithBetar}), we obtain an estimator of $\beta_0$, saying $\hat\beta$.

\noindent\textbf{Algorithm}

For an initial value of $\beta_0$, denoted by $\tilde \beta$ which can be obtained from the existing methods for the single index model, such as the minimum average variance estimator (MAVE).
\begin{itemize}
\item[]
Step 1:  Obtain an estimator $\hat{g}(z;\tilde\beta,h)$ of $g$ from (\ref{ghat}).
\item[]
Step 2:  Let $\hat{g}(X_i^\tau\beta)=\hat{g}(X_i^\tau\beta;\tilde\beta,h)$. Solving (\ref{eq:ESTD}) gives an estimator $\hat\beta$ of $\beta_0$. Then update $\tilde \beta=\hat\beta$.
\item[]
Step 3:  Repeat Steps 1 and 2 until some convergence criteria satisfies.
\end{itemize}
To solve (\ref{eq:ESTD}), an iteratively Newton-Raphson algorithm is employed in this paper, that is
\[\hat \beta=\tilde \beta +J_{\tilde \beta^{(r)}}B^{-1}_n(\hat g,\tilde \beta ^{(r)})Q_n(\hat g, \tilde \beta ^{(r)}),\]
where
\[ B_n(\hat g,\tilde \beta ^{(r)}) \equiv -\frac 1 n\sum^{n}_{i=1}\left\{
        Y_i e^{-\hat{g}(X_i^\tau\beta)}
        +Y_i^{-1} e^{\hat{g}(X_i^\tau\beta)}
        \right\}
        \hat{g'}^2(X_i^\tau\beta)J_{\beta^{(r)}}^\tau X_i X_i^\tau J_{\beta^{(r)}}
.\]

\subsection{Main theorems}
Let $\hat \beta$ and $\hat{g}^*$ be the final estimators of $\beta_0$ and $g$, where
\[\hat{g}^*(z) = \sum^{n}_{i=1} W_{ni}(z,\hat{\beta},h)\log Y_i.\]
The following two Theorems show asymptotic properties of $\hat \beta$ and $\hat{g}^*$.

\begin{thm}\label{thLinear}
Suppose that conditions \textup{C1}--\textup{C6} given in the Appendix hold,
then
  \[
  \sqrt{n}\big(\hat{\beta}-\beta_0\big)\xrightarrow{\;\mathcal{L}\;}
  N\big(0,J_{\beta_0^{(r)}}V^{-1}QV^{-1}J^\tau_{\beta_0^{(r)}}\big)\; ~~{\mbox{as}}~~ n\rightarrow\infty,
  \]
  where $Q=E\big\{g'(X^\tau\beta_0)^2J^\tau_{\beta_0^{(r)}}
\{(\varepsilon_i-\varepsilon^{-1}_i)X_i+E(\varepsilon_i+\varepsilon^{-1}_i)\log{\varepsilon_i}\ E(X_i|X_i^\tau\beta_0)\}
\{(\varepsilon_i-\varepsilon^{-1}_i)X_i+E(\varepsilon_i+\varepsilon^{-1}_i)\log{\varepsilon_i}\ E(X_i|X_i^\tau\beta_0)\}^\tau
J_{\beta_0^{(r)}}\big\}$, $V$ and $J_{\beta_0^{(r)}}$ are defined in condition \textup{C6}.
\end{thm}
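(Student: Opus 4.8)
The plan is to establish the asymptotic normality of $\hat\beta$ through the estimating-equation framework, treating the nonparametric plug-in $\hat g$ and $\hat g'$ as nuisance components whose estimation error contributes an extra bias term to the influence function. First I would work with the reparametrized estimating equation $Q_n(\hat g,\hat\beta^{(r)})=0$ from \eqref{eq:ESTD} and perform a Taylor expansion about the true value $\beta_0^{(r)}$, writing
\begin{equation*}
0=Q_n(\hat g,\hat\beta^{(r)})=Q_n(\hat g,\beta_0^{(r)})+\frac{\partial Q_n(\hat g,\beta^{(r)})}{\partial \beta^{(r)}}\Big|_{\bar\beta^{(r)}}(\hat\beta^{(r)}-\beta_0^{(r)}),
\end{equation*}
so that $\sqrt n(\hat\beta^{(r)}-\beta_0^{(r)})=-[\partial Q_n/\partial\beta^{(r)}]^{-1}\sqrt n\, Q_n(\hat g,\beta_0^{(r)})$. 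The Jacobian term should converge in probability to $-V$ (the matrix defined in condition C6) by a uniform law of large numbers together with the uniform consistency of $\hat g$ and $\hat g'$ on a compact support of $X^\tau\beta_0$ — this uses the local-linear smoother rates under the bandwidth conditions, and C1--C5 guarantee the relevant moments and the nonvanishing of the denominator $S_{n,0}S_{n,2}-S_{n,1}^2$. The delete-one-component device ensures $J_{\beta_0^{(r)}}$ appears consistently via the chain rule.

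The heart of the argument is the asymptotic expansion of $\sqrt n\, Q_n(\hat g,\beta_0^{(r)})$. I would substitute $Y_i=\exp\{g(X_i^\tau\beta_0)\}\varepsilon_i$, so at $\beta=\beta_0$ the bracket $Y_i e^{-\hat g(X_i^\tau\beta_0)}-Y_i^{-1}e^{\hat g(X_i^\tau\beta_0)}$ equals $\varepsilon_i e^{g-\hat g}-\varepsilon_i^{-1}e^{\hat g-g}$, which, expanded to first order in $\hat g-g$, gives $(\varepsilon_i-\varepsilon_i^{-1})-(\varepsilon_i+\varepsilon_i^{-1})(\hat g(X_i^\tau\beta_0)-g(X_i^\tau\beta_0))+o_p$. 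The leading term $(\varepsilon_i-\varepsilon_i^{-1})g'(X_i^\tau\beta_0)J_{\beta_0^{(r)}}^\tau X_i$ is mean zero (because E$(\varepsilon-\varepsilon^{-1})=0$ is forced by the normalization implicit in the model/identification conditions, analogous to \cite{Chen2016LPRA}) and contributes one piece of the influence function. The second term involves $\hat g(X_i^\tau\beta_0)-g(X_i^\tau\beta_0)=\sum_j W_{nj}\log Y_j - g(X_i^\tau\beta_0)$; I would insert the standard local-linear expansion $\hat g(z)-g(z)= \frac{1}{n}\sum_j \frac{K_h(X_j^\tau\beta_0-z)}{f(z)}(\log Y_j - g(z)) + \text{bias} + o_p$, and then exchange the order of summation (a $U$-statistic / projection argument) to rewrite the double sum as a single sum over $j$ of $E(\varepsilon+\varepsilon^{-1})\log\varepsilon_j\, g'(X_j^\tau\beta_0)J_{\beta_0^{(r)}}^\tau E(X_j\mid X_j^\tau\beta_0)$ plus negligible remainder. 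The choice of undersmoothing bandwidth $h=\hat h_{\rm opt} n^{-2/15}$ is exactly what kills the $\sqrt n\times(\text{bias})$ contribution, so that only the stochastic part survives.

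Combining, $\sqrt n\, Q_n(\hat g,\beta_0^{(r)})= n^{-1/2}\sum_{i=1}^n g'(X_i^\tau\beta_0)J_{\beta_0^{(r)}}^\tau\{(\varepsilon_i-\varepsilon_i^{-1})X_i+E(\varepsilon+\varepsilon^{-1})\log\varepsilon_i\, E(X_i\mid X_i^\tau\beta_0)\}+o_p(1)$, which by the Lindeberg--L\'evy CLT converges to $N(0,Q)$ with $Q$ as stated. Then $\sqrt n(\hat\beta^{(r)}-\beta_0^{(r)})\xrightarrow{\mathcal L} N(0,V^{-1}QV^{-1})$, and applying the delta method through the smooth map $\beta^{(r)}\mapsto\beta(\beta^{(r)})$ with Jacobian $J_{\beta_0^{(r)}}$ yields the stated covariance $J_{\beta_0^{(r)}}V^{-1}QV^{-1}J_{\beta_0^{(r)}}^\tau$ for $\sqrt n(\hat\beta-\beta_0)$. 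The main obstacle I anticipate is the careful handling of the plug-in nonparametric error: one must control $\hat g$ and $\hat g'$ uniformly, justify interchanging the double summation via a $U$-statistic decomposition with a nondegenerate projection, and verify that the bandwidth rates make the bias term $o_p(n^{-1/2})$ while keeping the variance term $O_p(1)$ — together with the fact that $\hat g$ here is evaluated along the estimated index $X_i^\tau\beta$, so the cross-derivative $\partial\hat g(X_i^\tau\beta)/\partial\beta$ must be shown not to contribute an extra term at $\beta_0$ (it vanishes in expectation by the same $E(\varepsilon-\varepsilon^{-1})=0$ identity).
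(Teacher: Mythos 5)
Your proposal follows essentially the same route as the paper: your expansion of $\sqrt{n}\,Q_n(\hat g,\beta_0^{(r)})$ into the leading $(\varepsilon_i-\varepsilon_i^{-1})g'(X_i^\tau\beta_0)J_{\beta_0^{(r)}}^\tau X_i$ term plus the kernel-projection term involving $E(\varepsilon+\varepsilon^{-1})\log\varepsilon_i\,E(X_i\mid X_i^\tau\beta_0)$, with the Jacobian tending to $-V$ and undersmoothing removing the bias, is precisely the content of the paper's Lemma A.2 (the uniform linearization $R(\beta^{(r)})=U_1(\beta_0^{(r)})+U_2(\beta_0^{(r)})-nV(\beta^{(r)}-\beta_0^{(r)})+o_P(\sqrt n)$ over the root-$n$ neighborhood $\mathcal{B}_n$, built on the uniform rates for $\hat g,\hat g'$ in Lemma A.1), after which the paper, like you, finishes with a CLT for the influence-function sum and the delta method through $J_{\beta_0^{(r)}}$, citing the argument of Wang, Xue, Zhu and Chong (2010). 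The only difference is bookkeeping: you use a pointwise mean-value Taylor expansion of the estimating equation (which strictly requires controlling the derivative of $\hat g'$ in the Jacobian), whereas the paper avoids that by working directly with the uniform expansion over $\mathcal{B}_n$.
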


\begin{thm}
  \label{th:thLS}
 Under conditions \textup{C1}, \textup{C2(i)}, \textup{C3(i)}, \textup{C4} and \textup{C5} in Appendix, and $\|\hat{\beta} -\beta_0\|=O_p\big(n^{-1/2}\big)$, we have
  \[
  \sup_{x\in A} |\hat{g}^*(x^\tau\hat{\beta})-g(x^\tau\beta_0)|
  =O_p\big( (\log n/nh)^{1/2}\big),\]
  where $A$ is defined in \textup{C1(i)}. Furthermore with bandwidth $h = O(n^{1/5})$, we have
  \[
  \sup_{(x, \beta) \in \mathcal{A}_n} |\hat{g}^*(x^\tau{\beta})-g(x^\tau\beta_0)|
  =O_p\big( n^{-2/5}(\log n)^{1/2}\big),\]
 where ${\cal A}_n=\{(x,\beta): (x,\beta)\in A\times R^p,
\|\beta-\beta_0\|\leq cn^{-1/2}\}$  for a constant $c>0$.

\end{thm}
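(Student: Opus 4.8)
The plan is to establish the sharper bound over $\mathcal{A}_n$ first and then read off the first assertion, since $\|\hat\beta-\beta_0\|=O_p(n^{-1/2})$ means that for every $\eta>0$ one may choose $c$ so that $P\big((x,\hat\beta)\in\mathcal{A}_n\ \text{for all}\ x\in A\big)>1-\eta$, and on that event $\hat g^*(x^\tau\hat\beta)$ is one of the quantities already controlled. Fix $(x,\beta)\in\mathcal{A}_n$, write $z=x^\tau\beta$, and use $\log Y_i=g(X_i^\tau\beta_0)+\log\varepsilon_i$ together with the defining identities of the local linear weights, $\sum_i W_{ni}(z,\hat\beta,h)=1$ and $\sum_i W_{ni}(z,\hat\beta,h)(X_i^\tau\hat\beta-z)=0$. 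Taylor-expanding $g(X_i^\tau\beta_0)$ about $z$ and splitting $X_i^\tau\beta_0-z=(X_i^\tau\hat\beta-z)+X_i^\tau(\beta_0-\hat\beta)$ yields
\begin{align*}
\hat g^*(x^\tau\beta)-g(x^\tau\beta_0)
&=\underbrace{\big[g(z)-g(x^\tau\beta_0)\big]}_{T_1}
+\underbrace{g'(z)(\beta_0-\hat\beta)^\tau\!\sum_i W_{ni}(z,\hat\beta,h)X_i}_{T_2}\\
&\quad+\underbrace{\tfrac12\sum_i W_{ni}(z,\hat\beta,h)g''(\xi_i)(X_i^\tau\beta_0-z)^2}_{T_3}
+\underbrace{\sum_i W_{ni}(z,\hat\beta,h)\log\varepsilon_i}_{T_4}.
\end{align*}

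The first three terms are disposed of by routine bookkeeping. Since $z-x^\tau\beta_0=x^\tau(\beta-\beta_0)$ with $A$ bounded (C1) and $g$ twice continuously differentiable (C3), $\sup|T_1|=O(n^{-1/2})$. A uniform law of large numbers for the averages $S_{n,r}(z;\beta,h)$ — using the bounded Lipschitz kernel of C2 together with a covering argument in $(z,\beta)$ — shows that $\sum_i W_{ni}(z,\hat\beta,h)X_i$ is uniformly bounded and $\sum_i|W_{ni}(z,\hat\beta,h)|=O_p(1)$; with $\|\beta_0-\hat\beta\|=O_p(n^{-1/2})$ this gives $\sup|T_2|=O_p(n^{-1/2})$. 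Because $|X_i^\tau\hat\beta-z|=O(h)$ on the support of $K_h(X_i^\tau\hat\beta-z)$, one gets $(X_i^\tau\beta_0-z)^2=O_p(h^2+n^{-1})$ and hence $\sup|T_3|=O_p(h^2+n^{-1})$. Under the bandwidth conditions in C4, each of these is $o\big((\log n/(nh))^{1/2}\big)$.

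The crux is $T_4$: one must show $\sup_{(x,\beta)\in\mathcal{A}_n}\big|\sum_i W_{ni}(x^\tau\beta,\hat\beta,h)\log\varepsilon_i\big|=O_p\big((\log n/(nh))^{1/2}\big)$. On the event $\{\|\hat\beta-\beta_0\|\le cn^{-1/2}\}$ this is no larger than the deterministic supremum, over $u$ in the (compact) range of $x^\tau\beta$ for $(x,\beta)\in\mathcal{A}_n$ and over $\beta'$ with $\|\beta'-\beta_0\|\le cn^{-1/2}$, of $|\sum_i W_{ni}(u,\beta',h)\log\varepsilon_i|$. I would first show, again via the uniform LLN for the $S_{n,r}$, that $S_{n,0}S_{n,2}-S_{n,1}^2$ is bounded away from zero on this set with probability tending to one, reducing the claim to the kernel-weighted numerator $\tfrac1n\sum_i K_h(X_i^\tau\beta'-u)\{S_{n,2}-(X_i^\tau\beta'-u)S_{n,1}\}\log\varepsilon_i$. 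To this I would apply the classical truncation-plus-Bernstein route: truncate $\log\varepsilon_i$ at a slowly growing level, bound the discarded tail with the moment condition on $\log\varepsilon$ in C5, apply Bernstein's inequality to the centered truncated sum at each fixed $(u,\beta')$ to get the rate $(\log n/(nh))^{1/2}$, and then pass to the supremum by a covering/chaining argument over the $(1+p)$-dimensional index set — polynomially many balls of a suitable negative-power-of-$n$ radius suffice, the oscillation of $W_{ni}$ inside each ball being controlled by the Lipschitz continuity of $K$. The $\log n$ in the rate is exactly the cost of this union bound; the argument is the standard uniform kernel-estimation bound adapted to the single-index setting as in [\citenum{Chang:2010kk}] and [\citenum{Wang:2010dj}].

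Combining $T_1$--$T_4$ gives the first bound, and substituting $h$ of order $n^{-1/5}$ — for which $(\log n/(nh))^{1/2}\asymp n^{-2/5}(\log n)^{1/2}$ while $h^2+n^{-1/2}=o\big(n^{-2/5}(\log n)^{1/2}\big)$ — gives the second. I expect the main obstacle to be precisely this uniform control of $T_4$: one must cope simultaneously with the shrinking-bandwidth feature (which forces truncation and an exponential inequality in place of a central limit theorem) and with the enlargement of the index set from $\{\beta_0\}$ to an $n^{-1/2}$-neighborhood, and one must verify that the latter does not inflate the covering cost beyond the $\log n$ already present in the rate. The bias bookkeeping for $T_1$--$T_3$ and the passage from the $\mathcal{A}_n$-uniform bound to $\sup_{x\in A}$ are by comparison routine.
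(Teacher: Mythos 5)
Your proposal follows essentially the same route as the paper: the paper obtains Theorem 2 by combining Lemma A.1 (the uniform bound over $\mathcal{A}_n$, whose proof it omits by reference to Lemma 4 of Wang, Xue, Zhu and Chong (2010)) with the high-probability event $\|\hat\beta-\beta_0\|\le cn^{-1/2}$, and your bias/stochastic decomposition with truncation, Bernstein's inequality and a covering argument is exactly the standard argument behind that lemma, spelled out in more detail than the paper itself gives. Apart from minor points (you swap the labels of conditions C2 and C3, and the step $|X_i^\tau\hat\beta-z|=O(h)$ tacitly assumes a compactly supported kernel, which C3(i) does not state, so the kernel tails would need the usual extra handling), the proposal is correct and essentially the paper's proof.
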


\section{Numerical studies}
\subsection{Simulation studies}

Simulation studies are conducted to evaluate performance of the proposed estimation method (saying LPRE). Data are generated from the following model
\[Y=\exp\{\sin(2X^\tau\beta_0)+2\exp(X^\tau\beta_0)\}\varepsilon,\]
where 
$\beta_0=(1,-1,1)^\tau/{\sqrt{3}}$, the covariate $X$ is from multivariate normal distributions 
$N(0,I_3)$, and $I_p$ is identity matrix with rank $p$. We consider three different error distributions: $\log(\varepsilon)\sim N(0,1)$ and uniform distribution $U(-2,2)$; and $\varepsilon$ follows the distribution with density function, $f_{\rm eff}(x)=c\exp(-x-x^{-1}+2-\log x)I(x>0)$, where $c=0.5941$ is the normalization constant. Under $f_{\rm eff}(x)$, the LPRE estimator is efficient. Sample size $n$ takes $50$ and $100$.
All results are based on $500$ replications.

For comparison, we also consider a linear LPRE estimation (Linear) with criterion function
\[D_1^*(\beta)=\sum^{n}_{i=1}\left\{
        Y_i\exp(-{X}_i^\tau\beta)+Y_i^{-1}\exp({X}_i^\tau\beta)-2
        \right\},\]
and the least square estimation (LS) with criterion function
\[D_2^*(\beta)=\sum^{n}_{i=1}\left\{
       \log(Y_i)-{g}(X^\tau\beta)
        \right\}^2.\]
Averages of estimation error (EE) and mean squared errors (MSE) of the estimated $\hat{\beta}$
\[EE=\sqrt{|1-\hat{\beta}^\tau\beta_0|}\ ,\qquad MSE=(\hat{\beta}-\beta_0)^\tau(\hat{\beta}-\beta_0),\]
are computed to assess the accuracy of estimate $\hat{\beta}$.
And average squared error (ASE)
\[ASE=\frac{1}{n}\sum_{i=1}^n\{g(X^\tau_i\beta_0)-\hat{g}(X^\tau_i\hat{\beta})\}^2\ \]
is used to measure performance of $\hat{g}$.

Table \ref{tab1} presents averages of bias (Bias), standard error (SE) and root-mean-square error (RMSE) of estimate $\hat{\beta}$, and Table \ref{tab2} shows EE and MSE for $\hat{\beta}$, and ASE for $\hat{g}$. Since Linear does not involve the function $g$, Table \ref{tab2} does not list ASE for Linear.
From Tables \ref{tab1} and \ref{tab2}, Linear has the largest Bias, SE, RMSE, EE and MSE.
Biases are close to 0 for LS and LPRE. However, LPRE has smaller or comparable SE, RMSE , EE, MSE and ASE than LS.
Under normal errors, LPRE has comparable performance than LS. But for uniform errors and $f_{\rm eff}(x)$, LPRE performs better compared to LS.

\begin{table}[h!]
\tabcolsep=3pt\fontsize{8}{12}
\selectfont
\begin{center}
\caption{Simulation results of the bias (Bias), standard error (SE) and root-mean-square error (RMSE) of estimator $\hat{\beta}$.}
\label{tab1}
\vskip 0.3cm
{
    \begin{tabular}{*{2}{l}*{9}{c}}\toprule

 & & \multicolumn{3}{c}{$\beta_1$}&\multicolumn{3}{c}{$\beta_2$}&\multicolumn{3}{c}{$\beta_3$}\\
\cmidrule{3-11}
        $n$ & Method
        & Bias & SE & RMSE & Bias & SE & RMSE & Bias & SE & RMSE\\
          \cmidrule{1-11}
$50$ & \multicolumn{3}{l}{$\log\varepsilon \sim N(0,1)$} &&&&&&&\\
&LS & -0.4081 & 6.1700 & 6.1835 & 0.3220 & 5.9135 & 5.9223 & -0.2120 & 5.9518 & 5.9556\\
&LPRE & -0.4624 & 6.4256 & 6.4422 & 0.4255 & 6.1361 & 6.1508 & -0.1372 & 6.2450 & 6.2465\\
&Linear & 137.7814 & 71.1255 & 155.0566 & -139.3865 & 64.5025 & 153.5877 & 136.4566 & 69.5304 & 153.1498\\
& \multicolumn{3}{l}{$\log\varepsilon \sim U(-2,2)$} &&&&&&&\\
&LS & -0.2816 & 6.8547 & 6.8605 & 0.8139 & 6.7896 & 6.8382 & -0.1481 & 7.0528 & 7.0543\\
&LPRE & -0.2145 & 6.1522 & 6.1559 & 0.7678 & 6.1695 & 6.2171 & -0.0204 & 6.2632 & 6.2632\\
&Linear & 139.3320 & 68.2595 & 155.1540 & -136.9369 & 68.0073 & 152.8944 & 137.9552 & 67.7995 & 153.7154\\
& \multicolumn{3}{l}{$\varepsilon \sim f_{\rm eff}(x)$} &&&&&&&\\
&LS & 0.1515 & 3.9036 & 3.9065 & 0.0456 & 3.9420 & 3.9423 & -0.5096 & 3.9442 & 3.9770\\
&LPRE & 0.1307 & 3.9113 & 3.9135 & 0.0538 & 3.9296 & 3.9300 & -0.4794 & 3.9356 & 3.9647\\
&Linear & 138.4750 & 65.2134 & 153.0625 & -140.3098 & 64.0498 & 154.2375 & 139.6475 & 64.5872 & 153.8601\\
$100$ & \multicolumn{3}{l}{$\log\varepsilon \sim N(0,1)$} &&&&&&&\\
&LS & 0.2416 & 3.9093 & 3.9168 & 0.4594 & 4.1282 & 4.1536 & -0.2042 & 4.0129 & 4.0181\\
&LPRE & 0.2051 & 4.1420 & 4.1470 & 0.4313 & 4.3439 & 4.3653 & -0.2453 & 4.2583 & 4.2654\\
&Linear & 141.1604 & 50.2338 & 149.8322 & -138.6525 & 51.4673 & 147.8966 & 135.7281 & 49.1377 & 144.3490\\
& \multicolumn{3}{l}{$\log\varepsilon \sim U(-2,2)$} &&&&&&&\\
&LS & -0.4725 & 4.7664 & 4.7897 & 0.1979 & 4.6621 & 4.6663 & 0.0986 & 4.6164 & 4.6175\\
&LPRE & -0.4026 & 4.0799 & 4.0997 & 0.1174 & 4.0059 & 4.0076 & 0.1034 & 3.9013 & 3.9026\\
&Linear & 138.9521 & 49.2139 & 147.4100 & -137.6824 & 49.0441 & 146.1567 & 138.4797 & 48.8258 & 146.8352\\
& \multicolumn{3}{l}{$\varepsilon \sim f_{\rm eff}(x)$} &&&&&&&\\
&LS & -0.1913 & 2.5855 & 2.5926 & -0.1102 & 2.5959 & 2.5982 & -0.0927 & 2.5655 & 2.5672\\
&LPRE & -0.2051 & 2.5544 & 2.5626 & -0.1197 & 2.5745 & 2.5773 & -0.0844 & 2.5301 & 2.5315\\
&Linear & 136.5872 & 44.0885 & 143.5265 & -141.0013 & 46.4743 & 148.4629 & 134.3737 & 45.8908 & 141.9939\\
    \bottomrule
\multicolumn{8}{l}{* All results should $\times 10^{-2}$}
\end{tabular}}
\end{center}
\end{table}

\begin{table}[h!]
\tabcolsep=3pt\fontsize{8}{12}
\selectfont
\begin{center}

\caption{Simulation results of EE and MSE for $\hat{\beta}$, and ASE for $\hat{g}$.}
\label{tab2}
\vskip 0.3cm
{
    \begin{tabular}{*{1}{l}*{4}{c}*{3}{c}}\toprule
    &\multicolumn{3}{c}{$n = 50$} && \multicolumn{3}{c}{$n = 100$}\\
    \cline{2-4} \cline{6-8}
        Method
        & EE & MSE & ASE && EE & MSE & ASE \\
\cmidrule{1-8}
\multicolumn{3}{l}{$\log\varepsilon \sim N(0,1)$} &&&&&\\
LS & 6.4555 & 1.0878 & 4.8990 & &4.3424 & 0.4874 & 2.2225\\
LPRE & 6.7353 & 1.1835 & 5.3043 && 4.5656 & 0.5445 & 2.4785\\
Linear & 152.0497 & 710.8659 & --- && 153.5368 & 651.5973 & ---\\
\multicolumn{3}{l}{$\log\varepsilon \sim U(-2,2)$} &&&&&\\
LS & 7.4633 & 1.4359 & 6.7380 & &5.0761 & 0.6604 & 3.0440\\
LPRE & 6.6663 & 1.1578 & 5.4854 && 4.3226 & 0.4810 & 2.2325\\
Linear & 152.2977 & 710.7787 & --- && 153.6529 & 646.5207 & ---\\
\multicolumn{3}{l}{$\varepsilon \sim f_{\rm eff}(x)$} &&&&&\\
LS & 4.2350 & 0.4662 & 1.9161 && 2.8072 & 0.2006 & 0.8588\\
LPRE & 4.2288 & 0.4648 & 1.9075 && 2.7750 & 0.1962 & 0.8417\\
Linear & 153.3506 & 708.9026 & --- && 153.1695 & 628.0336 & ---\\
    \bottomrule
\multicolumn{7}{l}{* All results should $\times 10^{-2}$}
\end{tabular}}
\end{center}
\end{table}

\subsection{Real data example}
The proposed method is applied to a body fat data which aims to study relationship between percentage of body fat (response $Y$) and other measurement indices such as body mass index (BMI=weight/height$^2$, $x_1$), chest ($x_2$), hip ($x_3$), circumference of the skinfold measurements neck ($x_4$), age ($x_5$), abdomen ($x_6$), thigh ($x_7$), knee ($x_8$), ankle ($x_9$), biceps ($x_{10}$), forearm($x_{11}$) and wrist ($x_{12}$).
There are 252 observed subjects. Deleting two possible outliers (case number 42 and 183) one with response $Y=0$ and one with unreasonable measurement of neck, the left 250 samples are partitioned into two parts:
the first 200 samples to build the model (\ref{eq:org}) and the last 50 ones to test the fitted model.

Due to bad performance of the method Linear in simulation studies, we only present results of parameter estimation in Table 3 for the methods LS and LPRE.
In Table 3, $p$-value is calculated with $1-\Phi(|\hat{\beta}_j/\hat{s}_j|)$ for estimator $\hat{\beta}_j$ of the $j$th component of $\beta_0$, where $\hat{s}_j$ is the estimated standard deviation for $\hat{\beta}_j$ and $\Phi(\cdot)$ is the cumulative distribution function of the standard normal distribution. The variance estimation from the methods LS and LPRE are obtained by bootstrap approach. We can see that these 2 methods
identify some common variables with same estimated directions, such as BMI, age, abdomen, thigh and wrist, when nominal significance level is $0.05$.
However, for knee and ankle, parameter estimates from LPRE have positive values while those for LS are negative, although they are not significantly
equal to $0$. The LPRE shows that the percentage of body fat becomes greater as knee and ankle are larger, while the LS provides the opposite effect.

Performance of prediction from LS and LPRE is measured by four different median indices: median of absolute prediction errors $\{|Y_i-\hat{Y}_i|\}$ (MPE), median of product relative prediction errors $\{|Y_i-\hat{Y}_i|^2/(Y_i\hat{Y}_i\}$ (MPPE), median of additive relative prediction errors $\{|Y_i-\hat{Y}_i|/Y_i+|Y_i-\hat{Y}_i|/\hat{Y}_i\}$ (MAPE) and median of squared prediction errors $\{(Y_i-\hat{Y}_i)^2\}$(MSPE), where $\hat{Y}_i = \exp\{\hat{g}(X_i^\tau\hat\beta)\},\ i=201,\cdots,250$. Table 4 lists results of MPE, MPPE, MAPE and MSPE. We can see that the proposed method LPRE has
smaller MPE, MPPE, MAPE and MSPE, compared to the absolute error method LS.

\begin{table}[!h]
\begin{center}
\caption{Parameter estimates of the body fat from LS and LPRE.}
\label{tab3}
\vskip 0.3cm
\scalebox{0.8}{
    \begin{tabular}{*{5}{c}}\toprule
    & \multicolumn{2}{c}{LS}     & \multicolumn{2}{c}{LPRE}\\
    \cmidrule(l){2-3} \cmidrule(l){4-5}
    & Est.(SE)& $p$-Value & Est.(SE)& $p$-Value\\
    \cmidrule{2-5}
BMI     & 0.4706  (0.1480) & 0.0007 & 0.2566 (0.1520) & 0.0457 \\
chest   & -0.0681 (0.1230) & 0.2898 & 0.0075 (0.1231) & 0.4756 \\
hip     & -0.4505 (0.1280) & 0.0002 & -0.5241(0.1289) & 0.0000 \\
neck    & -0.2163 (0.1149) & 0.0298 & -0.2468(0.1147) & 0.0157 \\
age     & 0.1552  (0.0882) & 0.0393 & 0.2019 (0.0886) & 0.0114 \\
abdomen & 0.6518  (0.1197) & 0.0000 & 0.6482 (0.1211) & 0.0000 \\
thigh   & 0.2206  (0.1317) & 0.0469 & 0.3028 (0.1277) & 0.0089 \\
knee    & -0.0517 (0.1067) & 0.3139 & 0.0170 (0.1082) & 0.4376 \\
ankle   & -0.0008 (0.0795) & 0.4959 & 0.0067 (0.0811) & 0.4671 \\
biceps  & 0.0321  (0.1069) & 0.3819 & 0.0700 (0.1103) & 0.2629 \\
forearm & 0.1161  (0.1208) & 0.1682 & 0.0881 (0.1219) & 0.2349 \\
wrist   & -0.0971 (0.1151) & 0.1995 & -0.1816(0.1132) & 0.0543\\
\bottomrule
\end{tabular}}
\end{center}
\end{table}

\begin{table}[!h]
\begin{center}
\caption{Results of prediction from LS and LPRE.}
\label{tab4}
\vskip 0.3cm
\scalebox{0.8}{
    \begin{tabular}{*{5}{c}}\toprule
    & MPE     & MPPE & MAPE & MSPE\\
LS & 4.649 & 0.012 & 0.501 & 21.614\\
LPRE & 4.237 & 0.009 & 0.413 & 17.955\\
      \bottomrule
\end{tabular}}
\end{center}
\end{table}

\section{Conclusion}

For the multiplicative single index model, we use the two-stage procedure to propose a relative error estimation approach
which is invariant in scale of the response and the covariate variables.
The asymptotical properties: consistency and normality of the resulting estimators, are provided.
Simulation studies show that the relative error methods have comparable or better performance on estimations of the regression coefficients and
the link function, compared to the least square estimation.

\section*{Appendix}
\renewcommand{\appendixname}{\bfseries A}
\renewcommand\theequation{A.{\arabic{equation}}}

Before proving the asymptotic properties of the estimators, we need to provide the following conditions:

\renewcommand{\labelenumi}{{C\arabic{enumi}}}
\begin{enumerate}
  \item
    \begin{enumerate}[(i)]
      \item The distribution of $X$ has a compact support set $A$.
      \item The density function $f_{\beta}(\cdot)$ of $X^\tau\beta$ is positive and satisfies a Lipschitz condition of order $1$ for $\beta$ in a neighborhood of $\beta_0$. Further, $\beta$ has a positive and bounded density function $f_{\beta_0}(\cdot)$ on its support $\mathcal{T}$.
    \end{enumerate}
  \item
    \begin{enumerate}[(i)]
      \item The second derivative of function $g$ is bounded and continuous.
      \item $l_s(\cdot)$ satisfies Lipschitz condition of order 1, where $l_s(\cdot)$ is the $s$th component of $l(\cdot)$, where $l(z)=E(X|X^\tau\beta_0=z)$, $1 \leqslant s \leqslant p$.
    \end{enumerate}
  \item
    \begin{enumerate}[(i)]
      \item The kernel $K$ is a bounded, continuous and symmetric probability density function, satisfying
        \[ \int_{-\infty}^{\infty}u^2K(u)\,du \neq 0\; , \quad
            \int_{-\infty}^{\infty}|u|^2K(u)\,du < \infty\, ;\]
      \item $K$ satisfies Lipschitz conditions on ${R}^1$.
    \end{enumerate}
    \item
    \begin{enumerate}[(i)]
      \item $h \rightarrow 0$, $nh^2/\log ^2 n \rightarrow \infty$, $\limsup_{n \rightarrow \infty}nh^5 \leqslant c < \infty$;
      \item $h_1 \rightarrow 0$, $nh^2h_1^3/\log ^2 n \rightarrow \infty$, $nh^4 \rightarrow 0$, $\limsup_{n \rightarrow \infty}nh_1^5 < \infty$
    \end{enumerate}
  \item
    \begin{enumerate}[(i)]
      \item $E(\log\varepsilon|X)=0$, $\text{var}(\log\varepsilon|X)=\sigma^2<\infty$, $E\big( (\log\varepsilon)^4|X\big)<\infty$.
      \item $E(\varepsilon-\varepsilon^{-1}|X)=0$
    \end{enumerate}

  \item \quad $V=E\{(\varepsilon_i+\varepsilon^{-1}_i)g'(X_i^\tau\beta_0)^2 J_{\beta^{(r)}_0}^\tau X_iX_i^\tau J_{\beta^{(r)}_0}\}$ is a positive definite matrix, where $J_{\beta^{(r)}_0}^\tau$ is defined by \eqref{eq:Jaco};
%
\end{enumerate}
\renewcommand{\labelenumi}{\arabic{enumi}.}

Conditions C1 -- C4 are commonly used in studies of single index model.
C5 and C6 are conditions for relative errors, which guarantees the asymptotic normality of parameter estimator.

\begin{lem}
  Suppose that conditions \textup{C1}-\textup{C4} and \textup{C5(i)} hold. We then have
  \begin{equation*}
  \sup_{(x,\beta)\in \mathcal{A}_n} |g(x^\tau\beta_0)-\hat{g}(x^\tau\beta;\beta,h)|
  =O_p\big( (\log n/nh)^{1/2})\big)\; .
  \end{equation*}
  If in addition, \textup{C5}(ii) also holds, then we have
  \begin{equation*}
  \sup_{(x,\beta)\in \mathcal{A}_n} |g'(x^\tau\beta_0)-\hat{g'}(x^\tau\beta;\beta,h_1)|
  =O_p\big( (\log n/nh_1^3)^{1/2})\big)\; .
  \end{equation*}
 where ${\cal A}_n=\{(x,\beta): (x,\beta)\in A\times R^p,
\|\beta-\beta_0\|\leq cn^{-1/2}\}$  for a constant $c>0$.
\end{lem}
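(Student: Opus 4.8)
The plan is to prove both displays by the classical decomposition of a local linear smoother into a deterministic (bias) part and a stochastic (variance) part, made uniform over the shrinking set $\mathcal{A}_n$ by a discretization argument. Writing $\log Y_i = g(X_i^\tau\beta_0)+\log\varepsilon_i$ and using the representation \eqref{ghat}, $\hat g(x^\tau\beta;\beta,h)=\sum_{i=1}^n W_{ni}(x^\tau\beta,\beta,h)\log Y_i$, where $W_{ni}$ is built from $S_{n,r}(z;\beta,h)$, $r=0,1,2$, I would first exploit that the local linear weights reproduce affine functions, i.e.\ $\sum_i W_{ni}(z,\beta,h)=1$ and $\sum_i W_{ni}(z,\beta,h)(X_i^\tau\beta-z)=0$, to reduce everything to the uniform behaviour of the elementary averages
\[
S_{n,r}(z;\beta,h)=\frac1n\sum_{i=1}^n (X_i^\tau\beta-z)^r K_h(X_i^\tau\beta-z),\qquad
T_{n,r}(z;\beta,h)=\frac1n\sum_{i=1}^n (X_i^\tau\beta-z)^r K_h(X_i^\tau\beta-z)\log\varepsilon_i,
\]
together with the analogous averages carrying the extra factor $(X_i^\tau\beta-z)^2 g''(\cdot)$ coming from the Taylor remainder of $g$.

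Second, I would show that, uniformly over $(x,\beta)\in\mathcal{A}_n$ — so that $z=x^\tau\beta$ ranges over a fixed compact interval slightly larger than $\mathcal{T}$ — one has $S_{n,r}(z;\beta,h)=f_{\beta_0}(z)\mu_r+O_p(h^2)+O_p((\log n/nh)^{1/2})$ with $\mu_r=\int u^rK(u)\,du$, and in particular that the determinant $S_{n,0}S_{n,2}-S_{n,1}^2$ is bounded away from $0$ with probability tending to one, using that $f_{\beta_0}$ is bounded below on $\mathcal{T}$ (C1(ii)) and C3. The device here, and the step I expect to be the main obstacle, is the uniformity in $\beta$: I would cover $\{\beta:\|\beta-\beta_0\|\le cn^{-1/2}\}\times\mathcal{T}$ by a net of cardinality $N_n=O(n^{a})$ for a suitable $a>0$ (mesh $\delta_n$ polynomially small in $n$); control $S_{n,r}-ES_{n,r}$ and $T_{n,r}$ at each net point by Bernstein's inequality, after truncating $\log\varepsilon_i$ at a slowly growing level and bounding the discarded tail via the fourth moment in C5(i) — each summand being $O(h^{-1})$ with variance $O(h^{-1})$, which is exactly what produces the $(\log n/nh)^{1/2}$ rate after a union bound over $N_n$ points; and bound the oscillation between neighbouring net points using the Lipschitz property of $K$ in C3(ii) together with $|X_i^\tau(\beta-\beta')|\le C\|\beta-\beta'\|$ from compactness of $A$ (C1(i)): since $z\mapsto K_h(\cdot-z)$ has Lipschitz constant $O(h^{-2})$, the increment is $O(\delta_n h^{-2})$, which is negligible once $\delta_n$ is chosen small enough. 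Conditions C4(i) are precisely what drive all of these error terms to zero.

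Third, for the bias part I would write $X_i^\tau\beta_0-z=(X_i^\tau\beta-z)-X_i^\tau(\beta-\beta_0)$ and Taylor-expand $g$ about $z$ using C2(i); the affine-reproducing property annihilates the constant and linear terms, the quadratic remainder contributes $O_p(h^2)$ after the $S_{n,r}$-expansion, and the cross terms involving $\|\beta-\beta_0\|\le cn^{-1/2}$ contribute $O_p(n^{-1/2})$. Since also $g(z)-g(x^\tau\beta_0)=g(x^\tau\beta)-g(x^\tau\beta_0)=O(n^{-1/2})$ by the mean value theorem and boundedness of $g'$, collecting terms gives
\[
\sup_{(x,\beta)\in\mathcal{A}_n}\big|\hat g(x^\tau\beta;\beta,h)-g(x^\tau\beta_0)\big|
=O_p\big(h^2\big)+O_p\big(n^{-1/2}\big)+O_p\big((\log n/nh)^{1/2}\big),
\]
and under C4(i) the first two terms are dominated by the last, which proves the first display. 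For the second display I would repeat the argument with the derivative weights $\widetilde W_{ni}$ from \eqref{dghat} and bandwidth $h_1$; the only structural change is that the relevant elementary averages now carry an extra factor $h_1^{-1}$, so their summands are $O(h_1^{-2})$ with variance $O(h_1^{-3})$, and the union bound over the net yields the $(\log n/nh_1^3)^{1/2}$ rate, while C3(ii), C4(ii) and C2 again render the discretization error and the bias negligible.
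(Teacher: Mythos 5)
The paper never spells out a proof of this lemma---it only remarks that the argument is ``similar to that of Lemma 4 in [\citenum{Wang:2010dj}]''---and your proposal is exactly the standard route underlying that cited lemma: Taylor expansion plus the affine-reproducing property of local linear weights for the bias, and a net/truncation/Bernstein chaining argument (using C1(i), C3(ii), C4 and the fourth moment in C5(i)) for the uniform stochastic term, with the extra $O(n^{-1/2})$ terms coming from $\|\beta-\beta_0\|\le cn^{-1/2}$ and the $h^2$, $n^{-1/2}$ contributions absorbed into $(\log n/nh)^{1/2}$ via $\limsup nh^5<\infty$; this is correct and consistent with the intended (omitted) proof. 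Two cosmetic points only: your uniform expansion should carry the scaling $S_{n,r}(z;\beta,h)=h^r\{f_{\beta_0}(z)\mu_r+o_p(1)\}$ (the determinant is then of exact order $h^2$), and for the derivative part the reproducing identities are $\sum_i\widetilde W_{ni}(z,\beta,h_1)=0$ and $\sum_i\widetilde W_{ni}(z,\beta,h_1)(X_i^\tau\beta-z)=1$, giving a bias of order $O(h_1)$ under C2(i), which $\limsup nh_1^5<\infty$ in C4(ii) indeed makes negligible relative to $(\log n/nh_1^3)^{1/2}$.
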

\begin{proof}
Proof of Lemma A.1 is similar to that of Lemma 4 in [\citenum{Wang:2010dj}]. We omit it here.
\end{proof}
\begin{lem}
  Suppose that conditions \textup{C1}-\textup{C6} are satisfied. Then we have
\begin{equation*}
    \sup_{\beta^{(r)}\in \mathcal{B}_n}
    \| R(\beta^{(r)})-U(\beta^{(r)}_0)+nV(\beta^{(r)}-\beta^{(r)}_0)\|
    =o_P\big(\sqrt{n}\,\big) \; ,
\end{equation*}
  where $\mathcal{B}_n=\{\beta^{(r)}:\| \beta^{(r)}-\beta^{(r)}_0 \| \leqslant  cn^{-1/2}\}$ or a constant $c > 0$, $V$ is defined in condition \textup{C6}, and
  \begin{align*}
    R(\beta^{(r)}) =
    & \sum^{n}_{i=1} \left\{
        Y_i e^{-\hat{g}(X_i^\tau\beta;\beta(\beta^{(r)}),h)}
        -Y_i^{-1} e^{\hat{g}(X_i^\tau\beta;\beta(\beta^{(r)}),h)}
        \right\}\hat{g'}(X_i^\tau\beta;\beta(\beta^{(r)}),h_1)J_{\beta^{(r)}}^\tau X_i\; ;\\
  U(\beta^{(r)}_0)=
      & \sum^{n}_{i=1}g'(X_i^\tau\beta_0)J_{\beta^{(r)}_0}^\tau[(\varepsilon_i-\varepsilon^{-1}_i)X_i+E(\varepsilon_i+\varepsilon^{-1}_i)\log{\varepsilon_i}\ E(X_i|X_i^\tau\beta_0)]\\
    =:\; &U_1(\beta^{(r)}_0)+U_2(\beta^{(r)}_0)\;\\
    U_1(\beta^{(r)}_0)=&\sum^{n}_{i=1}(\varepsilon_i-\varepsilon^{-1}_i)g'(X_i^\tau\beta_0)J_{\beta^{(r)}_0}^\tau X_i\\
    U_2(\beta^{(r)}_0)=&\sum^{n}_{i=1}E(\varepsilon_i+\varepsilon^{-1}_i)g'(X_i^\tau\beta_0)J_{\beta^{(r)}_0}^\tau \log{\varepsilon_i}\ E(X_i|X_i^\tau\beta_0)
  \end{align*}
\end{lem}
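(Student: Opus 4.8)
The plan is to establish, uniformly over $\beta^{(r)}\in\mathcal{B}_n$, the stochastic linearization
\[R(\beta^{(r)}) = U(\beta^{(r)}_0) - nV(\beta^{(r)}-\beta^{(r)}_0) + o_P(\sqrt n)\]
by a double Taylor expansion of the estimating function --- one in the nonparametric plug-in error, one in $\beta^{(r)}-\beta^{(r)}_0$. The workhorse is Lemma A.1, which gives $\sup_{\mathcal{A}_n}|\hat g(x^\tau\beta;\beta,h)-g(x^\tau\beta_0)| = O_P((\log n/nh)^{1/2})$ and $\sup_{\mathcal{A}_n}|\hat{g'}(x^\tau\beta;\beta,h_1)-g'(x^\tau\beta_0)| = O_P((\log n/nh_1^3)^{1/2})$. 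Writing $\Delta_i(\beta):=\hat g(X_i^\tau\beta;\beta,h)-g(X_i^\tau\beta_0)$ and using the model identity $Y_i = e^{g(X_i^\tau\beta_0)}\varepsilon_i$, the bracket in $R(\beta^{(r)})$ equals $Y_i e^{-\hat g(X_i^\tau\beta)}-Y_i^{-1}e^{\hat g(X_i^\tau\beta)} = \varepsilon_i e^{-\Delta_i(\beta)}-\varepsilon_i^{-1}e^{\Delta_i(\beta)}$, which expands as $(\varepsilon_i-\varepsilon_i^{-1}) - (\varepsilon_i+\varepsilon_i^{-1})\Delta_i(\beta) + \tfrac{1}{2}(\varepsilon_i-\varepsilon_i^{-1})\Delta_i(\beta)^2 + \cdots$, the remainder being uniformly negligible on $\mathcal{B}_n$ since $\sup_i|\Delta_i(\beta)| = o_P(1)$ under C4(i).

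Next I would decompose $\Delta_i(\beta)$. Substituting $\log Y_j = g(X_j^\tau\beta_0)+\log\varepsilon_j$ into $\hat g(z;\beta,h)=\sum_j W_{nj}(z,\beta,h)\log Y_j$ and using the reproducing identities $\sum_j W_{nj}(z,\beta,h)=1$ and $\sum_j W_{nj}(z,\beta,h)(X_j^\tau\beta-z)=0$, a second-order Taylor expansion of $g$, and the consistency of the local-linear smoother of $X_j$ (in direction $\beta$) for $E(X\mid X^\tau\beta)$, one writes $\Delta_i(\beta)$ as the sum of a smoothing-bias part of order $O_P(h^2)$, a part linear in $\beta-\beta_0$ whose coefficient involves $g'(X_i^\tau\beta_0)$, $X_i$ and --- via the $\beta$-dependence of the weights --- $l(X_i^\tau\beta_0)$ (with $l(\cdot)$ as in C2(ii)), the stochastic part $\sum_j W_{nj}(X_i^\tau\beta,\beta,h)\log\varepsilon_j$, and an $O_P(\|\beta-\beta_0\|^2)$ term, all uniformly on $\mathcal{B}_n$. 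Inserting this, together with $\hat{g'}(X_i^\tau\beta)=g'(X_i^\tau\beta_0)+o_P(1)$, $J_{\beta^{(r)}}^\tau X_i = J_{\beta^{(r)}_0}^\tau X_i + O(\|\beta^{(r)}-\beta^{(r)}_0\|)$ and $\beta-\beta_0 = J_{\beta^{(r)}_0}(\beta^{(r)}-\beta^{(r)}_0)+O(\|\beta^{(r)}-\beta^{(r)}_0\|^2)$, into the expanded $R(\beta^{(r)})$, the estimating function splits into: (i) the zeroth-order piece $\sum_i(\varepsilon_i-\varepsilon_i^{-1})g'(X_i^\tau\beta_0)J_{\beta^{(r)}_0}^\tau X_i = U_1(\beta^{(r)}_0)$; (ii) the piece linear in $\beta^{(r)}-\beta^{(r)}_0$, which by the law of large numbers and condition C6 collapses uniformly to $-nV(\beta^{(r)}-\beta^{(r)}_0)+o_P(\sqrt n)$; (iii) the piece carrying the smoother noise, $-\sum_i(\varepsilon_i+\varepsilon_i^{-1})g'(X_i^\tau\beta_0)J_{\beta^{(r)}_0}^\tau X_i\sum_j W_{nj}(X_i^\tau\beta,\beta,h)\log\varepsilon_j$; and (iv) lower-order remainders.

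Term (iii) is the heart of the matter. Interchanging the order of summation, it reads $-\sum_j\log\varepsilon_j\{\sum_i(\varepsilon_i+\varepsilon_i^{-1})g'(X_i^\tau\beta_0)J_{\beta^{(r)}_0}^\tau X_i\,W_{nj}(X_i^\tau\beta,\beta,h)\}$, and the plan is to show that the inner leave-in kernel sum converges, uniformly in $j$ and in $\beta\in\mathcal{B}_n$, to $E\{(\varepsilon+\varepsilon^{-1})g'(X^\tau\beta_0)J_{\beta^{(r)}_0}^\tau X\mid X^\tau\beta_0=X_j^\tau\beta_0\}=E(\varepsilon+\varepsilon^{-1})\,g'(X_j^\tau\beta_0)\,J_{\beta^{(r)}_0}^\tau\,E(X\mid X_j^\tau\beta_0)$, with projection ($U$-statistic) error $o_P(\sqrt n)$; this uses $S_{n,0}(z;\beta,h)\to f_\beta(z)$, $S_{n,1}(z;\beta,h),S_{n,2}(z;\beta,h)=O_P(h^2)$, the kernel conditions C3, and the fact that the diagonal ($j=i$) contribution is $O_P(1/h)=o_P(\sqrt n)$ by $nh^2/\log^2 n\to\infty$. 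Hence term (iii) equals $U_2(\beta^{(r)}_0)+o_P(\sqrt n)$. For the remainders (iv): the smoothing bias of $\hat g$ contributes $O_P(nh^2)=o_P(\sqrt n)$ by the undersmoothing condition $nh^4\to 0$ in C4(ii); the quadratic term $\tfrac{1}{2}\sum_i(\varepsilon_i-\varepsilon_i^{-1})\Delta_i(\beta)^2 g'(X_i^\tau\beta_0)J_{\beta^{(r)}_0}^\tau X_i$ is bounded by $O_P(n)\cdot O_P(\log n/nh)=O_P(\log n/h)=o_P(\sqrt n)$, again by $nh^2/\log^2 n\to\infty$; and the errors from replacing $\hat{g'}$ by $g'$ and $J_{\beta^{(r)}}$ by $J_{\beta^{(r)}_0}$ are multiplied by the conditionally centered $\varepsilon_i-\varepsilon_i^{-1}$, so by C5(ii), $h_1\to 0$ and the remaining rate conditions in C4(ii) they too are $o_P(\sqrt n)$. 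Assembling (i)--(iv) yields $R(\beta^{(r)})=U_1(\beta^{(r)}_0)+U_2(\beta^{(r)}_0)-nV(\beta^{(r)}-\beta^{(r)}_0)+o_P(\sqrt n)$, which is the claim.

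Finally, uniformity over $\mathcal{B}_n$: since $\mathcal{B}_n$ shrinks at rate $n^{-1/2}$ and every remainder above has been bounded by an $o_P(\sqrt n)$ quantity times a factor that is Lipschitz (or locally bounded) in $\beta^{(r)}$ with a polynomially controlled modulus (using the smoothness and compactness conditions C1--C3), the supremum is absorbed by the same maximal-inequality and uniform-in-bandwidth estimates already used to prove Lemma A.1 and Lemma 4 of [\citenum{Wang:2010dj}]. I expect the main obstacle to be precisely this step combined with term (iii): carrying the projection of the leave-in double sum uniformly over the shrinking neighborhood while simultaneously keeping every bias contribution below the $\sqrt n$ threshold --- which is exactly where the delicate bandwidth balance $nh^2/\log^2 n\to\infty$, $nh^4\to 0$ and $nh^2h_1^3/\log^2 n\to\infty$ in C4 is consumed.
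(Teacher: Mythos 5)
Your plan is correct and follows essentially the same route as the paper's own proof: expand $\varepsilon_i e^{-\Delta_i}-\varepsilon_i^{-1}e^{\Delta_i}$ to zeroth, first and second order (the paper's $R_2$, $R_3$, $R_1$), kill the quadratic and $\hat{g}'-g'$ remainders with the uniform rates of Lemma A.1, get $U_1$ and $-nV(\beta^{(r)}-\beta_0^{(r)})$ from the law of large numbers together with $J_{\beta^{(r)}}-J_{\beta_0^{(r)}}=O_p(n^{-1/2})$, and extract $U_2$ from the smoother-noise term by the projection argument, which the paper delegates to Lemma 5 of Wang, Xue, Zhu and Chong (2010) and you spell out by interchanging the two summations. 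Your bandwidth bookkeeping ($nh^4\to 0$ for the bias, $nh^2/\log^2 n\to\infty$ for the quadratic and diagonal terms) is exactly how condition C4 is consumed in the paper's argument, so no gap.
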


\begin{proof}
  Separating $R(\beta^{(r)})$, we have $R(\beta^{(r)}) = R_1(\beta^{(r)})+R_2(\beta^{(r)})+R_3(\beta^{(r)})$
    \begin{align*}
R_1(\beta^{(r)})=&\sum^{n}_{i=1}  \Bigg\{\varepsilon_i \Big [e^{g(X_i^\tau\beta_0)-\hat{g}(X_i ^\tau\beta;\beta(\beta^{(r)}),h)}-1-\big(g(X_i^\tau\beta_0)-\hat{g}(X_i ^\tau\beta;\beta(\beta^{(r)}),h)\big)\Big ]\\
      &-\varepsilon^{-1}_i \Big [e^{-[g(X_i^\tau\beta_0)-\hat{g}(X_i ^\tau\beta;\beta(\beta^{(r)}),h)]}-1+ \big(g(X_i^\tau\beta_0)-\hat{g}(X_i ^\tau\beta;\beta(\beta^{(r)}),h)\big)\Big ]\Bigg\}\\
      &\quad\times \hat{g'}(X_i^\tau\beta;\beta(\beta^{(r)}),h_1)J_{\beta^{(r)}}^\tau X_i\\
R_2(\beta^{(r)})=&\sum^{n}_{i=1} (\varepsilon_i-\varepsilon^{-1}_i)\hat{g'}(X_i^\tau\beta;\beta(\beta^{(r)}),h_1)J_{\beta^{(r)}}^\tau X_i\\
R_3(\beta^{(r)})=&\sum^{n}_{i=1} (\varepsilon_i+\varepsilon^{-1}_i)\big[g(X_i^\tau\beta_0)-\hat{g}(X_i ^\tau\beta;\beta(\beta^{(r)}),h)\big]\hat{g'}(X_i^\tau\beta;\beta(\beta^{(r)}),h_1)J_{\beta^{(r)}}^\tau X_i
    \end{align*}
For $R_1(\beta^{(r)})$, by Taylor's expansion, we have $R_1(\beta^{(r)})=R_{11}(\beta^{(r)})+R_{12}(\beta^{(r)})$
\begin{align*}
  R_{11}(\beta^{(r)})=& \sum^{n}_{i=1}(\varepsilon_i-\varepsilon^{-1}_i)\big[\hat{g}(X_i ^\tau\beta;\beta(\beta^{(r)}),h)-g(X_i^\tau\beta_0)\big]^2\hat{g'}(X_i^\tau\beta;\beta(\beta^{(r)}),h_1)J_{\beta^{(r)}}^\tau X_i\\
  R_{12}(\beta^{(r)})=& \sum^{n}_{i=1}(\varepsilon_i-\varepsilon^{-1}_i)\hat{g'}(X_i^\tau\beta;\beta(\beta^{(r)}),h_1)J_{\beta^{(r)}}^\tau X_io_p\big(\big[\hat{g}(X_i ^\tau\beta;\beta(\beta^{(r)}),h)-g(X_i^\tau\beta_0)\big]^2\big)
\end{align*}
So we only need to proof
\begin{equation*}
  \sup_{\beta^{(r)}\in \mathcal{B}_n}\left\|R_{11}(\beta^{(r)})\right\|=o_P\big(\sqrt{n}\,\big)
\end{equation*}

Separating $R_{11}(\beta^{(r)})$, we have $R_{11}(\beta^{(r)}) = H_1(\beta^{(r)})+H_2(\beta^{(r)})$
   \begin{align*}
      H_1(\beta^{(r)})=&\sum^{n}_{i=1}(\varepsilon_i-\varepsilon^{-1}_i)\big[g(X_i^\tau\beta_0)-\hat{g}(X_i ^\tau\beta;\beta(\beta^{(r)}),h)\big]^2\\
      &\quad\times\big[\hat{g'}(X_i^\tau\beta;\beta(\beta^{(r)}),h_1)-g'(X_i^\tau\beta_0)\big]J_{\beta^{(r)}}^\tau X_i \\
      H_2(\beta^{(r)})=&\sum^{n}_{i=1}(\varepsilon_i-\varepsilon^{-1}_i)\big[g(X_i^\tau\beta_0)-\hat{g}(X_i ^\tau\beta;\beta(\beta^{(r)}),h)\big]^2 g'(X_i^\tau\beta_0)J_{\beta^{(r)}}^\tau X_i \\
    \end{align*}

Note that $J_{\beta^{(r)}}-J_{\beta_0^{(r)}}=O_p\big(n^{-1/2}\big)$ for all $\beta^{(r)}\in \mathcal{B}_n$. By Lemma A.1 and by the law of large numbers we obtain

\begin{equation}
  \left\|H_1(\beta^{(r)})\right\|=o_P\big(\sqrt{n}\,\big).
\end{equation}

We can show that

\begin{equation}
  \begin{aligned}
  \frac{1}{\sqrt{n}}\left\|H_2(\beta^{(r)})\right\|=&O_p\left( \frac{\log n}{nh_1^3}\cdot \frac{1}{n} \sum^{n}_{i=1} (\varepsilon_i-\varepsilon^{-1}_i)g'(X_i^\tau\beta_0)J_{\beta^{(r)}_0}^\tau X_i \right)\\
     \xrightarrow{p}&\ 0\; ,
  \end{aligned}
\end{equation}

Substituting (A.6)--(A.9) to $R_1(\beta^{(r)})$, we prove
\begin{equation}
 \sup_{\beta^{(r)}\in{\mathcal B}_n}\|R_{1}(\beta^{(r)},\beta^{(r)})\|=o_P\big(\sqrt{n}\,\big)
\end{equation}

For $R_2(\beta^{(r)})$, Separating $R_2(\beta^{(r)})$, we have $R_2(\beta^{(r)}) =R_{21}(\beta^{(r)})+R_{22}(\beta^{(r)})$
   \begin{align*}
      R_{21}(\beta^{(r)})=&\sum^{n}_{i=1}(\varepsilon_i-\varepsilon^{-1}_i)\big[\hat{g'}(X_i^\tau\beta;\beta(\beta^{(r)}),h_1)-g'(X_i^\tau\beta_0)\big]J_{\beta^{(r)}}^\tau X_i \\
      R_{22}(\beta^{(r)})=&\sum^{n}_{i=1}(\varepsilon_i-\varepsilon^{-1}_i)g'(X_i^\tau\beta_0)J_{\beta^{(r)}}^\tau X_i
   \end{align*}

Similar to the proof of $H_2(\beta^{(r)})$, we can prove that $\|R_{21}(\beta^{(r)})\|=o_P\big(\sqrt{n}\,\big)$. Note that $J_{\beta^{(r)}}-J_{\beta_0^{(r)}}=O_p\big(n^{-1/2}\big)$ for all $\beta^{(r)}\in \mathcal{B}_n$, we have

\begin{equation}
  \sup_{\beta^{(r)}\in {\cal B}_n}\|R_{22}(\beta^{(r)})-U_1(\beta^{(r)}_0)\| = o_P(\sqrt{n}\,).
\end{equation}

%

Separating $R_3(\beta^{(r)})$, we have $R_3(\beta^{(r)}) =R_{31}(\beta^{(r)})-R_{32}(\beta^{(r)})-R_{33}(\beta^{(r)})+R_{34}(\beta^{(r)})-R_{35}(\beta^{(r)})$
   \begin{align*}
      R_{31}(\beta^{(r)})= &\sum^{n}_{i=1}(\varepsilon_i+\varepsilon^{-1}_i)\big[g(X_i^\tau\beta_0)-\hat{g}(X_i^\tau\beta;\beta(\beta^{(r)}),h)\big]\\
      &\quad \quad \times\big[\hat{g'}(X_i^\tau\beta;\beta(\beta^{(r)}),h_1)-g'(X_i^\tau\beta_0)\big]J_{\beta^{(r)}}^\tau X_i \\
      R_{32}(\beta^{(r)})=&\sum^{n}_{i=1}(\varepsilon_i+\varepsilon^{-1}_i-E\varepsilon_i-E\varepsilon^{-1}_i)\big[\hat{g}(X_i^\tau\beta_0)-\hat{g}(X_i^\tau\beta;\beta(\beta^{(r)}),h)\big]g'(X_i^\tau\beta_0)J_{\beta^{(r)}}^\tau X_i\\
      R_{32}(\beta^{(r)})=&\sum^{n}_{i=1}E(\varepsilon_i+\varepsilon^{-1}_i)g'(X_i^\tau\beta_0)J_{\beta^{(r)}}^\tau \log{\varepsilon_i}\ l(X_i^\tau\beta_0)\\
      R_{34}(\beta^{(r)})=&\sum^{n}_{i=1}(\varepsilon_i+\varepsilon^{-1}_i)\big[\hat{g}(X_i^\tau\beta_0)-\hat{g}(X_i^\tau\beta;\beta(\beta^{(r)}),h)\big]g'(X_i^\tau\beta_0)J_{\beta^{(r)}}^\tau X_i \\
      R_{35}(\beta^{(r)})=&\sum^{n}_{i=1}E(\varepsilon_i+\varepsilon^{-1}_i)g'(X_i^\tau\beta_0)J_{\beta^{(r)}}^\tau\\
      &\quad \times\Big\{X_i\big[\hat{g}(X_i^\tau\beta_0)-\hat{g}(X_i^\tau\beta;\beta(\beta^{(r)}),h)\big]-\log{\varepsilon_i}\ l(X_i^\tau\beta_0)\Big\}
   \end{align*}
where $l(X_i^\tau\beta_0)=E(X_i|X_i^\tau\beta_0)$.

Similar to the proof of $R_2(\beta^{(r)})$, we can prove that
\begin{align}
&\|R_{31}(\beta^{(r)})\|=o_P\big(\sqrt{n}\,\big)\ ;\\
&\|R_{32}(\beta^{(r)})\|=o_P\big(\sqrt{n}\,\big)\ ;
\end{align}
\begin{equation}
  \sup_{\beta^{(r)}\in {\cal B}_n}\|R_{33}(\beta^{(r)})-U_2(\beta^{(r)}_0)\| = o_P(\sqrt{n}\,).
\end{equation}
For $R_{34}(\beta^{(r)})$, by Taylor expansion of $\beta^{(r)}-\beta_0^{(r)}$ with a suitable mean $\bar{\beta}^{(r)}\in{\cal B}_n$ and $\bar{\beta}=\bar{\beta}(\bar{\beta}^{(r)})$, we get $R_{34}(\beta^{(r)})= M_{1}\bigl(\beta^{(r)},\bar{\beta}^{(r)}\bigr)+M_{2}\big(\beta^{(r)},\bar{\beta}^{(r)}\big)$
\begin{align*}
  M_{1}\bigl(\beta^{(r)},\bar{\beta}^{(r)}\bigr)=&\sum^{n}_{i=1} (\varepsilon_i+\varepsilon^{-1}_i)g'(X_i^\tau\beta_0)\big[\hat{g'}(X_i^\tau\bar{\beta};\bar{\beta},h_1)-g'(X_i^\tau\beta_0)\big]\\
  &\quad \times J_{\beta^{(r)}}^\tau X_iX_i^\tau J_{\bar{\beta}^{(r)}}\big(\beta^{(r)}-\beta_0^{(r)}\big)\\
 M_{2}\big(\beta^{(r)},\bar{\beta}^{(r)}\big)=& \sum^{n}_{i=1} (\varepsilon_i+\varepsilon^{-1}_i)g'^2(X_i^\tau\beta_0) J_{\beta^{(r)}}^\tau X_iX_i^\tau J_{\bar{\beta}^{(r)}}\big(\beta^{(r)}-\beta_0^{(r)}\big)
\end{align*}

By Lemma A.1 and the law of large numbers, we obtain
that
\[
 \sup_{\beta^{(r)},\bar{\beta}^{(r)}\in{\mathcal B}_n}\|M_{1}\bigl(\beta^{(r)},\bar{\beta}^{(r)}\bigr)\|=o_P\big(\sqrt{n}\,\big)
\]

and
\[
 \sup_{\beta^{(r)},\bar{\beta}^{(r)}\in{\mathcal B}_n}\|M_{2}\bigl(\beta^{(r)},\bar{\beta}^{(r)}\bigr)-nV\big(\beta^{(r)}-\beta_0^{(r)}\big)\| = o_P\big(\sqrt{n}\,\big).
\]
where $V=E\{(\varepsilon_i+\varepsilon^{-1}_i)g'(X_i^\tau\beta_0)^2 J_{\beta^{(r)}_0}^\tau X_iX_i^\tau J_{\beta^{(r)}_0}\}$

Therefore, we have
\begin{equation}
\sup_{\beta^{(r)}\in{\cal B}_n}\|R_{34}(\beta^{(r)})-n
V\big(\beta^{(r)}-\beta_0^{(r)}\big)\| = o_P\big(\sqrt{n}\,\big).
\end{equation}

The proof of $R_{35}(\beta^{(r)})$ is similar to those in the proof of Lemma 5 of [\citenum{Wang:2010dj}].
Therefor we can get
\begin{equation}
\sup_{\beta^{(r)}\in {\mathcal B}_n}\|R_{35}(\beta^{(r)})\| = o_P\big(\sqrt{n}\,\big)
\end{equation}



Substituting (A.2)--(A.15) into (A.1), we prove Lemma A.4
\end{proof}
\normalsize

{\bf Proof of Theorem~\ref{thLinear}}

This follows from arguments similar to the proof Theorem~2 of [\citenum{Wang:2010dj}] and is omitted.

\section*{References}

\bibliographystyle{elsarticle-num-names}

\end{document}